 \documentclass[letterpaper, 10 pt, conference]{ieeeconf}  

\IEEEoverridecommandlockouts                              

\let\labelindent\relax
\usepackage{enumitem}
\usepackage{cite}
\usepackage{amsmath, amssymb}
\usepackage{xcolor}
\usepackage{algorithm}
\usepackage{algpseudocode}
\usepackage{amsthm}    
\usepackage{amsmath}
\usepackage{amssymb} 

\newtheorem{lemma}{Lemma}

\usepackage{tikz}
\usetikzlibrary{positioning, shapes}
\usepackage{mathtools} 
\usepackage{todonotes}
\usepackage{amsthm}  
\usepackage{amsmath}
\usepackage{amssymb}
\usepackage{xcolor}
\usepackage{graphicx} 
\usepackage{subcaption}
\usepackage{subcaption}
\usepackage[font=small]{caption} 
\def\BibTeX{{\rm B\kern-.05em{\sc i\kern-.025em b}\kern-.08em
    T\kern-.1667em\lower.7ex\hbox{E}\kern-.125emX}}
\usepackage[framemethod=TikZ]{mdframed}
\usepackage{mdframed}
\usepackage{algorithm}
\usepackage{algpseudocode}
\theoremstyle{remark}
\newtheorem{remark}{Remark}
\makeatletter
\renewcommand{\ALG@beginalgorithmic}{\setcounter{ALG@line}{0}}
\makeatother

\renewcommand{\baselinestretch}{0.95}



\begin{document}

\title{Multi-Agent Event-Triggered LQG Control under Shared Communication Constraints}
\author{Zahra Hashemi and Dipankar Maity \vspace{-5mm}
\thanks{
This research is supported by the National Science Foundation CAREER Award 2443349.}
\thanks{
The authors are with the Department of Electrical and Computer Engineering at the University of North Carolina at Charlotte, NC, USA, 28223. (e-mails: {\tt \{zahrahashemi1, dmaity\}@charlotte.edu}).}
}

\maketitle
\thispagestyle{empty}
\pagestyle{empty}

\begin{abstract}
This letter studies event-triggered linear-quadratic-Gaussian (LQG) control for multi-agent systems sharing a communication network with limited per-step capacity. Although the agent dynamics are decoupled, the communication decisions are coupled through the shared network constraint, leading to a constrained multi-agent scheduling problem. We show that the optimal control law remains certainty-equivalent and decouples across agents through independent finite-horizon Riccati recursions, whereas the transmission schedule remains globally coupled. Based on this structure, we develop a centralized receding-horizon scheduling framework and reformulate the resulting problem as a mixed-integer linear program (MILP) using a closed-form characterization of the estimation-error covariance. To improve scalability, we derive a window-based skip-pruning condition that safely fixes consecutive transmission decisions to zero before solving the MILP, and we propose an auction-inspired scheduler based on one-step transmission-benefit scores. Numerical results show that the proposed model predictive control (MPC) scheduler achieves the best trade-off between control performance and communication cost, while the auction-based scheduler attains performance close to MPC with substantially lower computational complexity.

\end{abstract}

\section{Introduction}
\label{sec:introduction}

Networked control systems often operate under stringent communication
constraints, especially when multiple agents share a wireless medium.
In such settings, continuous transmission of all sensor measurements may
be infeasible due to bandwidth, energy, or access limitations. This has
motivated extensive work on \emph{event-triggered} and
\emph{communication-aware} control, where information is transmitted only
when it is sufficiently valuable for closed-loop
performance~\cite{maity2019optimal,wang2010event,dimarogonas2011distributed,hashemi2026event}.
In LQG settings, communication-constrained control has been studied from
several complementary perspectives, including finite-rate or costly
communication, minimal-feedback architectures, and explicit
communication-performance trade-offs
~\cite{molin2009lqg,molin2012optimality,borkar1997lqg,
maity2020minimal,sabag2023reducing}. Related studies have also considered
distributed and multi-agent LQG control under communication and
information constraints
~\cite{yilmaz2018distributed,perez2025distributed,
lee2024linear,kashyap2024optimal}.

A fundamental challenge arises when the communication medium is shared
by multiple agents~\cite{perez2025distributed,ge2019distributed,
mamduhi2025network}. Even if the agent dynamics are independent, the
communication decisions become coupled through the network capacity:
at each time step, only a limited number of transmissions can be
accommodated. As illustrated in Fig.~\ref{fig:network_capacity}, this
creates a multi-agent scheduling problem in which the network must decide
which agents should be granted access at each instant. The resulting
design problem combines stochastic control, event-triggered communication,
and combinatorial resource allocation over time.

\begin{figure}[t]
    \centering
    \includegraphics[width=0.8\columnwidth,trim={4.4cm 3.8cm 4.4cm 3. 4cm},clip]{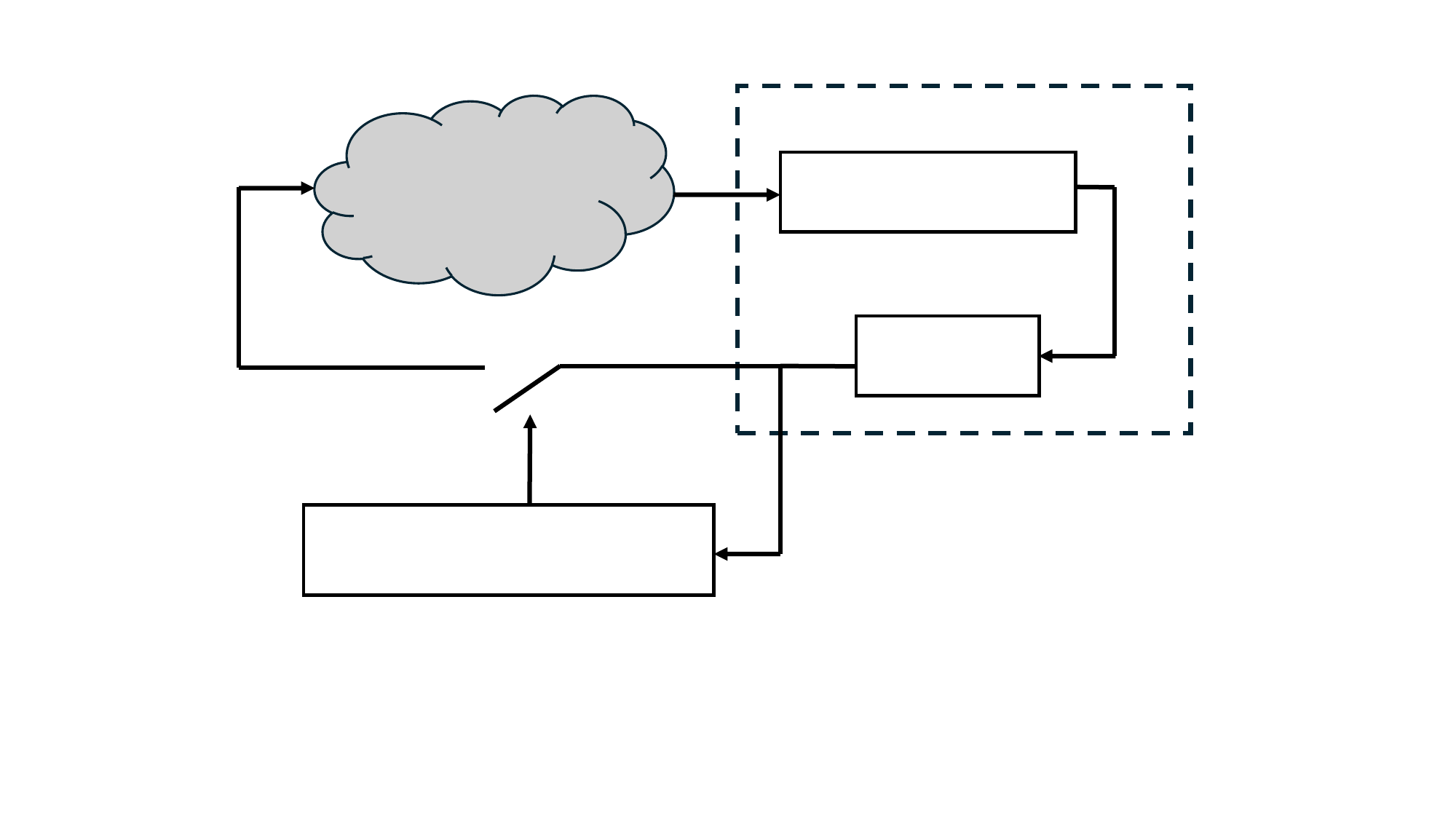}
    \begin{picture}(0,0)
            \put (-178,16) {Central Scheduler}
            \put (-198,91) {$z_{i,k}$}
            \put (-35,93) {$u_{i,k}$}
            \put (-98,60) {$x_{i,k}$}
            \put (-135,40) {$\theta_{i,k}$}
        \put (-93,83) { Controller  $i$}
        \put (-82,53) { Agent  $i$}
        \put (-165,83) {\textbf{Network}}
                \put (-98,96) {Sub-System $i$}
                
    \end{picture}          
    \caption{Networked control architecture for subsystem $i$ under a centralized network scheduler.}  \vspace{-4mm} \label{fig:network_capacity}
\end{figure}

For single-agent LQG systems with costly communication, Molin
\emph{et al.}~\cite{molin2009lqg,molin2012optimality} established
that the optimal controller is certainty-equivalent, reducing the joint
control and communication design problem to scheduling over the
estimation-error process. Building on this structure, our recent
work~\cite{hashemi2025linear} considered a single stochastic system
and derived an equivalent MILP
reformulation of the communication scheduling problem, together with
one-step optimality certificates for transmission and skipping decisions.
The present work extends this setting to multiple heterogeneous systems
sharing a communication network with limited per-step capacity.
Although the agent dynamics and control laws remain decoupled, the
transmission decisions are coupled through the shared capacity constraint,
since allocating a communication slot to one agent may prevent another
agent from transmitting at the same time. Consequently, the per-agent
scheduling problems can no longer be solved independently, and the
scheduler must jointly allocate the available communication resources
across agents.

To address this shared-capacity allocation problem, this paper develops a
centralized receding-horizon scheduling framework for multi-agent
event-triggered LQG control. Building on the single-agent covariance
representation and MILP reformulation in
\cite{hashemi2025linear}, the formulation is extended to multiple
heterogeneous agents whose transmission decisions are jointly coupled
through the per-step network-capacity constraint. The centralized MILP
is complemented by a window-based pruning mechanism and an
auction-inspired low-complexity allocation rule.

The main contributions of this paper are as follows.
First, we formulate the shared-capacity multi-agent scheduling problem
as a centralized MILP, where otherwise independent agent-level
transmission decisions become coupled through the common network
constraint.
Second, we derive a \emph{window-based skip-pruning condition} that
provides a sufficient condition for fixing multiple consecutive
transmission decisions to zero without loss of optimality, thereby
reducing the online search space.
Third, we develop an \emph{auction-inspired scheduling rule} based on
per-agent transmission-benefit scores, providing a computationally
lightweight alternative to the centralized MILP.

Finally, numerical results compare the centralized MPC, one-shot,
auction-based, and periodic schedulers, illustrating the trade-off
between control performance, communication cost, and computational
complexity. In particular, the MPC scheduler achieves the best overall
performance, while the auction-based heuristic attains comparable
performance with substantially lower online computation.

\textit{Notation:} Let $[N]\triangleq\{1,\ldots,N\}$. The sets of real numbers and nonnegative integers are denoted by $\mathbb{R}$ and $\mathbb{N}_0$, respectively. The notation $A\preceq B$ means $B-A\succeq 0$. For any $Q\succeq 0$, define $\|x\|_Q^2:=x^\top Qx$. We use $\operatorname{tr}(\cdot)$ for the trace and $\mathbb{E}[\cdot]$ for expectation.

\section{Problem Description} \label{sec:problem}
Let us consider a discrete-time stochastic \emph{multi-agent} networked control system comprised of $N$ agents. 
For each agent $i\in[N]$, the local state $x_{i,k} \in \mathbb{R}^{n_i}$ evolves according~to
\begin{equation}
\small
x_{i,k+1} \;=\; A_i x_{i,k} + B_i u_{i,k}\;+\; w_{i,k},
\label{eq:ma_dynamics}
\end{equation}
\begin{equation}
\small
z_{i,k} \;=\;
\begin{cases}
x_{i,k}, & \theta_{i,k} = 1,\\
\varnothing, & \theta_{i,k} = 0,
\end{cases}
\label{eq:ma_measurement}
\end{equation}
where $A_i \in \mathbb{R}^{n_i \times n_i}$ and $B_i \in \mathbb{R}^{n_i \times m_i}$ are constant local system matrices. The control input of agent $i$ is denoted by $u_{i,k} \in \mathbb{R}^{m_i}$, and $w_{i,k} \in \mathbb{R}^{n_i}$ is a zero-mean stochastic disturbance. In \eqref{eq:ma_measurement}, the variable $z_{i,k}$ denotes the information received by the corresponding controller at time $k$, which depends on the communication decision variable $\theta_{i,k} \in \{0,1\}$. Specifically, when $\theta_{i,k}=1$, the controller receives the local state $x_{i,k}$; otherwise, no packet is received, i.e., $z_{i,k}=\varnothing$. 

This formulation enables event-triggered communication across the network, where each agent selectively transmits its state based on a triggering rule, thereby reducing communication while retaining closed-loop performance.

The initial conditions are modeled as random variables with finite first and second moments. 
In particular, for each agent {\small $i\in[N]$}, the initial state $x_{i,0}$ has finite mean $\mu_{i,0}$ and covariance {\small $\Sigma_{i,0}$}. 
The process noises {\small $\{w_{i,k}\}_{k\ge 0}$} are assumed to be independent and identically distributed (i.i.d.) across time, with $w_{i,k}$ being zero-mean with finite covariance {\small $\Sigma_i^{w}$}. 
Moreover, the initial states {\small $\{x_{i,0}\}_{i=1}^N$} are statistically independent of the noise sequences {\small $\{w_{i,k}\}_{k\ge 0}$} for all $i$, and the disturbances are independent across agents and across time, i.e., {\small $w_{i,k}$} is independent of $w_{i',k'}$ for {\small $(i,k)\neq(i',k')$}.

To formalize the controller's access to measurement data over time, we introduce the \emph{information set} available at each time step. 
We assume the decision-maker also remembers its past control actions. 
For each agent $i\in[N]$, define the local information set at time $k\ge 1$ as
\[
\mathcal{Z}_{i,k} \;\triangleq\; \{\, z_{i,0:k},\ \theta_{i,0:k},\ u_{i,0:k-1}\,\},
\]
with initial information set $\mathcal{Z}_{i,0} \triangleq \{z_{i,0},\theta_{i,0}\}$.
Accordingly, the information set evolves as
\begin{equation}
\small
\mathcal{Z}_{i,k+1} \;=\; \mathcal{Z}_{i,k}\cup\{\, z_{i,k+1},\ \theta_{i,k+1},\ u_{i,k}\,\},
\quad \forall i\in[N],\ \forall k\ge 0.
\label{eq:info_set_multiagent}
\end{equation}
We consider the system evolution over a finite horizon of $T$ steps. The primary objective is to jointly design a collection of control policies
$\mu \triangleq \{\mu_i\}_{i=1}^N$ and an event-triggered communication scheduling policy
$\Theta \triangleq \{\Theta_i\}_{i=1}^N$, where $\Theta_i \triangleq \theta_{i,0:T-1}$, to minimize the expected finite-horizon cost.
For each agent $i\in[N]$, we define the individual performance index
\begin{equation}
\small
\begin{multlined}
J_i(\mu_i,\Theta_i)
=
\mathbb{E}\!\Bigg[
\sum_{k=0}^{T-1}
\Big(\|x_{i,k}\|^2_{Q_i}+\|u_{i,k}\|^2_{R_i}+\lambda_i\theta_{i,k}\Big)\\
+\|x_{i,T}\|^2_{Q_{i, T}}
\Bigg],
\end{multlined}
\label{eq:Ji_cost}
\end{equation}
where the expectation $\mathbb{E}[\cdot]$ is taken with respect to the randomness in the initial condition and process disturbance for agent $i$.
Here, $Q_i \succeq 0$ and $Q_{i,T} \succeq 0$ weight the state of agent~$i$; $R_i \succ 0$ weights its control effort; and $\lambda_i>0$ penalizes communication by agent $i$ through $\theta_{i,k}\in\{0,1\}$.

The network-level objective is defined as the sum of individual agent costs,
\begin{equation}
\small
J(\mu,\Theta) \;\triangleq\; \sum_{i=1}^{N} J_i(\mu_i,\Theta_i),
\label{eq:J_sum}
\end{equation}
so that the total number of transmissions over the horizon,
{\small $\sum_{i=1}^{N}\sum_{k=0}^{T-1}\theta_{i,k}$}, is directly penalized with agent-dependent weights $\lambda_i$.
Therefore, \eqref{eq:J_sum} captures the trade-off between closed-loop performance and communication effort over $[0,T]$.
To model limited network resources, we impose a per-time-step capacity constraint: at each time $k$, the network can serve at most $C$ agents simultaneously. Formally, the joint design problem is
\begin{align}
\min_{\mu,\Theta}\quad 
& J(\mu,\Theta)  \label{eq:ma_joint_opt} \\
\text{s.t.}\quad 
& \sum_{i=1}^{N} \theta_{i,k} \le C, 
&& \forall k=0,\dots,T-1. \label{eq:ma_joint_opt_cap}
\end{align} 

\section{Optimal Controller and Scheduler}
\label{sec:controller_scheduler}

\subsection{Certainty-equivalent optimal control (per agent)}
\label{subsec:control}

The certainty-equivalence structure used in this subsection follows from
the single-system results of
\cite{molin2009lqg,molin2012optimality} and is applied independently
to each agent. Since the agent dynamics and per-agent control costs are
decoupled, the shared network-capacity constraint affects the
communication allocation but does not alter the individual Riccati
recursions. Accordingly, for each agent $i$, the optimal controller is
\begin{equation}
u_{i,k}=-L_{i,k}\,
\mathbb{E}[x_{i,k}\mid\mathcal{Z}_{i,k}],
\qquad k=0,\ldots,T-1,
\label{eq:molin-control}
\end{equation}
where $\mathcal{Z}_{i,k}$ denotes the information available at time $k$. The gain $L_{i,k}$ is obtained from the standard finite-horizon Riccati recursion:
\begin{subequations}
\small
\begin{align}
L_{i,k} &= S_{i,k}^{-1} B_i^\top P_{i,k+1} A_i,\\
S_{i,k} &= R_i + B_i^\top P_{i,k+1} B_i,\\
P_{i,k} &= A_i^\top P_{i,k+1}A_i + Q_i
- A_i^\top\! P_{i,k+1} B_i S_{i,k}^{-1} B_i^\top\! P_{i,k+1} A_i,
\end{align}
\label{eq:riccati_multi}
\end{subequations}
with terminal condition {\small$P_{i,T}=Q_{i,T}$} and {\small$P_{i,k}\succeq 0$}. Let
\[\small
\hat{x}_{i,k} \;\triangleq\; \mathbb{E}[x_{i,k}\mid \mathcal{Z}_{i,k}]
\]
denote the conditional state estimate. Under event-triggered measurements, the estimate either resets to the true state when a transmission occurs or evolves according to the open-loop predictor otherwise:
\[\small
\hat{x}_{i,k} =
\begin{cases}
x_{i,k}, & \theta_{i,k}=1,\\
A_i\hat{x}_{i,k-1}+B_i u_{i,k-1}, & \theta_{i,k}=0.
\end{cases}
\]

Substituting~\eqref{eq:molin-control} into the quadratic objective yields an equivalent criterion in which the communication schedule $\Theta_i$ is the only remaining decision variable~\cite{molin2012optimality}:
\begin{equation}
\small
J_i(\mu_i^*,\Theta_i)
=
J_{i,\mathrm{const}}
+
\mathbb{E}\!\bigg[\sum_{k=0}^{T-1}\|e_{i,k}\|^2_{\Gamma_{i,k}}\bigg]+\lambda_i\,\mathbb{E}\!\bigg[\sum_{k=0}^{T-1}\!\!\theta_{i,k}\bigg].
\label{eq:cost-function2}
\end{equation}
where
\[
e_{i,k} \;\triangleq\; x_{i,k}-\hat{x}_{i,k}
\]
is the estimation error, and
\[\small
\Gamma_{i,k} \;\triangleq\; L_{i,k}^\top S_{i,k} L_{i,k}
\]
is the weighting matrix associated with the estimation-error penalty at time $k$. The constant term
\[\small
J_{i,\mathrm{const}}
\;\triangleq\;
\mathbb{E}\!\big[x_{i,0}^\top P_{i,0} x_{i,0}\big]
+
\sum_{k=0}^{T-1}\mathbb{E}\!\big[w_{i,k}^\top P_{i,k+1} w_{i,k}\big]
\]is independent of the schedule $\Theta_i$~\cite{molin2012optimality}. Moreover, the estimation error evolves according to~\cite[Theorem~1]{molin2009lqg}
\begin{equation}
\small
e_{i,k+1} = (1-\theta_{i,k+1})\big(A_i e_{i,k}+w_{i,k}\big),
\label{eq:error-dynamics}
\end{equation}
with $e_{i,0}=(1-\theta_{i,0})(x_{i,0} - \mathbb{E}[x_{i,0}])$. Hence, a transmission resets the next-step estimation error to zero, whereas in the absence of communication the error propagates open-loop and accumulates process noise.

Summing~\eqref{eq:cost-function2} over $i$ shows that the remaining design variable is the \emph{joint transmission schedule} $\{\theta_{i,k}\}_{k\ge0}$, which is coupled across agents through the shared network-capacity constraint~\eqref{eq:ma_joint_opt_cap}.
\subsection{Centralized MPC scheduling and MILP reformulation}

For a single system, the scheduler-side prediction-error representation,
closed-form covariance expansion, and corresponding MILP reformulation
were developed in \cite{hashemi2025linear}. We adopt this per-agent
representation here and extend it to the shared-capacity multi-agent
setting. The key distinction is that, although each agent's error
evolution can be represented independently, the transmission decisions
must be determined jointly because they are coupled through the common
per-step network-capacity constraint. 
Let
\[\small
\bar\theta_{i,k} \;\triangleq\; 1-\theta_{i,k},
\]
so that $\bar\theta_{i,k}=0$ indicates that a packet is received (hence $e_{i,k}=0$), whereas $\bar\theta_{i,k}=1$ indicates that no update is received and prediction is used instead.
Define, for each agent, the scheduler-side one-step prediction error (innovation) as
\begin{align*}
\small
e^s_{i,k}
&:= x_{i,k}-A_i\hat{x}_{i,k-1}-B_i u_{i,k-1}
 \;=\; A_i e_{i,k-1}+w_{i,k-1},\\
e^s_{i,0}
&:= x_{i,0}-\mathbb{E}[x_{i,0}].
\end{align*}
Then, for all $k$,
\[\small
e_{i,k}=\bar\theta_{i,k}\,e^s_{i,k}.
\]
Given the realized $\{e^s_{i,k}\}_{i=1}^N$, the scheduler solves, at time $k$, an optimization problem over the remaining horizon:
{\small\begin{align}
\min_{\{\bar\theta_{i,t|k}\}} \quad
& \sum_{i=1}^N \mathbb{E}\!\left[
\sum_{t=k}^{T-1}\|e_{i,t}\|^2_{\Gamma_{i,t}}
+\lambda_i\bigl(1-\bar\theta_{i,t|k}\bigr)
\right] \nonumber\\
\text{s.t.}\quad
& e_{i,t+1}=\bar\theta_{i,t+1|k}\bigl(A_i e_{i,t}+w_{i,t}\bigr),
\quad t=k,\ldots,T-2, \nonumber\\
& e_{i,k}=\bar\theta_{i,k|k}\,e^s_{i,k},
\quad \forall i\in [N], \nonumber\\
& \sum_{i=1}^N \bigl(1-\bar\theta_{i,t|k}\bigr)\le C,
\quad t=k,\ldots,T-1, \label{eq:cap_mpc}\\
& \bar\theta_{i,t|k}\in\{0,1\},
\quad \forall i\in [N],\ t=k,\ldots,T-1.
\label{eq:mpc_multi}
\end{align}}

The scheduler applies only the first decision $\bar\theta_{i,k|k}^*$ (equivalently, $\theta_{i,k|k}^*$) and repeats the optimization at time $k+1$. For a fixed sequence
\[
\Theta_{i,k}:=\{\theta_{i,t|k}\}_{t=k}^{T-1},
\]
and $t\ge k$, define
\[
\Sigma_{i,t}:=\mathbb{E}\!\left[e_{i,t}e_{i,t}^\top \mid \Theta_{i,k},e^s_{i,k}\right],
\qquad
\Sigma_{i,k}^s \triangleq e^s_{i,k}(e^s_{i,k})^\top.
\]
For each $\tau\in\{k,\ldots,t\}$, define
\begin{align}
\small
\mu_{i,t,\tau}
&:= \prod_{s=\tau}^{t}\bar\theta_{i,s|k},
\label{eq:mu_def_multi}\\[3pt]
G_{i,t,\tau}
&:=
\begin{cases}
A_i^{t-k}\Sigma_{i,k}^s\bigl(A_i^{t-k}\bigr)^\top, & \tau=k,\\[2pt]
A_i^{t-\tau}\Sigma_i^{w}\bigl(A_i^{t-\tau}\bigr)^\top, & \tau\ge k+1,
\end{cases}
\label{eq:G_def_multi}\\[3pt]
g_{i,t,\tau}
&:= \operatorname{tr}\!\bigl(\Gamma_{i,t}G_{i,t,\tau}\bigr).
\label{eq:g_def_multi}
\end{align}
Then, one may verify from~\cite{hashemi2025linear}  that
\begin{align}
\Sigma_{i,t}
&= \sum_{\tau=k}^{t}\mu_{i,t,\tau}G_{i,t,\tau}.
\label{eq:Sigma_sum_multi}
\end{align}
and therefore, 
\[ \small
\mathbb{E}\!\left[e_{i,t}^\top\Gamma_{i,t}e_{i,t}\mid \Theta_{i,k},e^s_{i,k}\right]
=
\operatorname{tr}(\Gamma_{i,t}\Sigma_{i,t})
=
\sum_{\tau=k}^{t}\mu_{i,t,\tau} g_{i,t,\tau}.
\]
To obtain a mixed-integer linear formulation, the previously defined binary variables $\mu_{i,t,\tau}$ are enforced through standard linearization constraints. Since $\mu_{i,t,\tau}$ represents the logical AND over the interval $\{\tau,\ldots,t\}$, the MPC scheduling problem at time $k$ can be written as the following centralized~MILP:
\vspace{-6mm}

{\small\begin{align}
\begin{split}\label{eq:milp_multi}
\min_{\{\bar\theta_{i,t|k}\},\,\{\mu_{i,t,\tau}\}} 
& \sum_{i=1}^N\sum_{t=k}^{T-1}\sum_{\tau=k}^{t} \mu_{i,t,\tau}\, g_{i,t,\tau}  + \sum_{i=1}^N\!\! \lambda_i\!\sum_{t=k}^{T-1}\bigl(1-\bar\theta_{i,t|k}\bigr) \\
\text{s.t.}\quad
& \bar\theta_{i,t|k}\in\{0,1\},\ \mu_{i,t,\tau}\in\{0,1\},\\
& \mu_{i,t,\tau}\le \bar\theta_{i,s|k},
\quad \forall s\in\{\tau,\ldots,t\},\ \forall i,\\
& \mu_{i,t,\tau}\ge \sum_{s=\tau}^{t}\bar\theta_{i,s|k}-(t-\tau),
\quad \forall i,\\
& \sum_{i=1}^N \bigl(1-\bar\theta_{i,t|k}\bigr)\le C,
\quad t=k,\ldots,T-1.
\end{split}
\end{align}}
\vspace{-3mm}

\begin{remark}[Implementation structure and scalability]
\label{rem:implementation_scalability}
The proposed design separates naturally into two computational layers:
(i) \emph{offline} per-agent Riccati recursions~\eqref{eq:riccati_multi}, which compute the gains $L_{i,k},\Gamma_{i,k}$ independently for each agent, and 
(ii) \emph{online} centralized MILP solves~\eqref{eq:milp_multi} at each time step $k$, which determine a network-feasible transmission schedule under the capacity constraint~\eqref{eq:cap_mpc}.
\end{remark}

The centralized MILP contains binary scheduling variables for every
agent and every remaining time step, together with auxiliary binary
variables used to linearize products of scheduling decisions. To reduce
this online combinatorial burden, we next derive a window-based
skip-pruning condition. The result provides a sufficient certificate
under which several consecutive transmission decisions of an agent can
be fixed to ``skip'' before solving the MILP. Hence, the corresponding
binary decisions need not be optimized online, and the associated
product variables and linearization constraints can be simplified,
reducing the effective size of the scheduling problem.
\begin{lemma}
\label{lem:window_safe_skip_pruning}
    Let us define ellipsoid {\small $\mathcal{E}_{i,k,r} = \{x\in \mathbb{R}^{n_i} \mid \|x\|^2_{S_{i,k}} < \tilde\lambda_{i, k,r} \}$}, where $S_k$ is the shape matrix of the ellipsoid and $\tilde\lambda_{i,k,r}$ is the level set:  
\begin{equation*}
\small
    S_{i,k}=\sum_{t=k}^{T-1}\left(A_i^{t-k}\right)^\top \Gamma_t A_i^{t-k},
\end{equation*}
\begin{equation*}
\small
\tilde \lambda_{i,k,r}
=
\lambda_i
-
\sum_{\tau=k+1}^{k+r}\sum_{t=\tau}^{T-1}
\operatorname{tr}\!\left(
\Gamma_{i,t} A_i^{\,t-\tau}\Sigma_i^{w}(A_i^{\,t-\tau})^\top
\right).
\end{equation*}
\vspace{-1mm}
If $e^s_{i,k} \in \mathcal{E}_r$, then $\theta_{i,k+\ell|k} = 0$ optimal for all $\ell = 0, 1,\ldots, r$.
\end{lemma}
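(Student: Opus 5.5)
The plan is an exchange (pruning) argument on the centralized MILP~\eqref{eq:milp_multi}. Take any feasible joint schedule in which agent $i$ transmits at least once in the window $\{k,\ldots,k+r\}$, i.e., $\bar\theta_{i,k+\ell|k}=0$ for some $\ell\le r$. Build a modified schedule that agrees with it everywhere except that $\bar\theta_{i,k+\ell|k}=1$ for all $\ell=0,\ldots,r$. I will show that the modified schedule is feasible and has strictly smaller cost. It then follows that no optimal schedule transmits for agent $i$ in the window, which is the claim. (I read $\Gamma_t$ in $S_{i,k}$ as $\Gamma_{i,t}$, and $\mathcal{E}_r$ as $\mathcal{E}_{i,k,r}$.)

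\emph{Feasibility.} Switching transmissions off only lowers $\sum_i(1-\bar\theta_{i,t|k})$, so the capacity constraint~\eqref{eq:cap_mpc} still holds. The other agents' variables and cost terms are unchanged, because the objective is separable across agents. The comparison therefore reduces to agent $i$'s cost.

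\emph{Communication saving.} At least one transmission is removed, so the term $\lambda_i\sum_t(1-\bar\theta_{i,t|k})$ decreases by at least $\lambda_i$.

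\emph{Error-cost increase.} By~\eqref{eq:Sigma_sum_multi}, agent $i$'s error cost is $\sum_{t=k}^{T-1}\sum_{\tau=k}^{t}\mu_{i,t,\tau}g_{i,t,\tau}$, with $\mu_{i,t,\tau}$ the product~\eqref{eq:mu_def_multi}. Each $g_{i,t,\tau}\ge 0$, since it is the trace of a product of PSD matrices ($\Gamma_{i,t}\succeq0$ and $G_{i,t,\tau}\succeq0$). Consider the terms by the value of $\tau$.
\begin{itemize}
\item For $\tau\ge k+r+1$, $\mu_{i,t,\tau}$ depends only on $\bar\theta_{i,s|k}$ with $s\ge\tau>k+r$, so these terms are unchanged.
\item For $\tau\in\{k,\ldots,k+r\}$, the new $\mu_{i,t,\tau}\le 1$ and the old one is $\ge0$. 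Hence the increase in each such term is at most $g_{i,t,\tau}$.
\end{itemize}
Summing the bounds over $\tau\in\{k,\ldots,k+r\}$ gives
\begin{equation*}
\Delta J_i^{e}\le \sum_{t=k}^{T-1} g_{i,t,k}+\sum_{\tau=k+1}^{k+r}\sum_{t=\tau}^{T-1} g_{i,t,\tau}.
\end{equation*}
For the $\tau=k$ part, $\operatorname{tr}(\Gamma_{i,t}A_i^{t-k}e^s_{i,k}(e^s_{i,k})^\top(A_i^{t-k})^\top)=\|A_i^{t-k}e^s_{i,k}\|^2_{\Gamma_{i,t}}$. Summing over $t$ gives $\|e^s_{i,k}\|^2_{S_{i,k}}$. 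The $\tau\ge k+1$ part is exactly the noise sum subtracted in $\tilde\lambda_{i,k,r}$.

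\emph{Conclusion.} Combining the two estimates, the net change in cost is bounded by
\begin{equation*}
\|e^s_{i,k}\|^2_{S_{i,k}}+(\lambda_i-\tilde\lambda_{i,k,r})-\lambda_i=\|e^s_{i,k}\|^2_{S_{i,k}}-\tilde\lambda_{i,k,r}.
\end{equation*}
This is strictly negative when $e^s_{i,k}\in\mathcal{E}_{i,k,r}$. Hence any schedule transmitting in the window is strictly improved, and $\theta_{i,k+\ell|k}=0$ for $\ell=0,\ldots,r$ in every optimal solution.

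\emph{Main obstacle.} The delicate step is the bookkeeping of which $\mu_{i,t,\tau}$ can change. It requires checking that terms with $\tau>k+r$ are unaffected, and that the worst-case increase over all possible patterns of original transmissions in the window is captured by the crude bound $\mu\le1$. This crude bound is what makes the certificate independent of the rest of the schedule and of the other agents.
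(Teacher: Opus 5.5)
Your proof is correct and follows the same exchange argument as the paper. You compare any schedule that transmits in the window $\{k,\ldots,k+r\}$ with its skip-transformed version, bound the error-cost increase by $\sum_{\tau=k}^{k+r}\sum_{t=\tau}^{T-1} g_{i,t,\tau}$ via $0\le\mu_{i,t,\tau}\le 1$, identify the $\tau=k$ part with $\|e^s_{i,k}\|^2_{S_{i,k}}$, and offset it against the saved $\lambda_i$; you also make explicit the capacity-feasibility of the modified joint schedule and the nonnegativity of $g_{i,t,\tau}$, which the paper's single-agent write-up leaves implicit.
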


\begin{proof}
     Let $\theta_{k:T-1}$ be an arbitrary schedule for the horizon $[k, T-1]$. 
     An $r$-window \textit{skip-transformed} version of this schedule is defined as {\small $\theta^{\text{skip}, r}_{k: T-1} = [\underbrace{0,\ldots,0}_{k: k+r}, \theta_{k+r+1:T-1}]$}, where the first $r+1$ transmissions are set to skip and the rest are the same as the original $\theta_{k:T-1}$. 
     Suppose $\theta_{k:T-1} \ne \theta^{\text{skip}, r}_{k: T-1}$, i.e., at least one transmission happens during the window $[k, k+r]$ in the schedule $\theta_{k:k+r}$. For \(\tau\in\{k,\dots,k+r\}\), we decompose
    \vspace{-2mm}
\begin{equation*}
 \small
     \mu_{i,t,\tau}
 = \prod_{s=\tau}^{t}\bar\theta_{i,s|k} = \underbrace{\prod_{s=\tau}^{k+r}\bar\theta_{i,s|k}}_{\triangleq \mu_{i,t,\tau}^{[k,k+r]}} \underbrace{\prod_{s=k+r+1}^{t}\bar\theta_{i,s|k}
 }_{\triangleq \mu_{i,t,\tau}^{[k+r+1,t]}}\,,
 \end{equation*}
 where the first factor depends on the decisions in the window and the second on the decisions after \(k+r\). 
 For any $\tau > k+r$, we have $\mu_{i,t,\tau}^{[k,k+r]} = 1$ and $\mu_{i,t,\tau}^{[k+r+1,t]} = \mu_{i,t,\tau}$
    
    Schedule $\theta^{\text{skip}, r}_{k: T-1}$ yields for all $t \le k+r$ that $\mu_{i,t,\tau}^{[k,k+r]} = 1$ and consequently,  $\mu_{i,t,\tau} = \mu_{i,t,\tau}^{[k+r+1,t]}$. 
    Therefore, one may~verify \vspace{-4mm}
    
    {\small
     \begin{align*}
        J_i&(\theta^{\text{skip}, r}_{k: T-1})  = \sum_{\tau=k}^{k+r}\sum_{t=\tau}^{T-1} \mu_{i,t,\tau}^{[k+r+1,t]}g_{i,t,\tau} \\& \qquad  \qquad + \sum_{\tau=k+r+1}^{T-1}\sum_{t=\tau}^{T-1}\mu_{i,t,\tau}\,g_{i,t,\tau} + \lambda_i \sum_{t=k+r+1}^{T-1} \theta_{t|k} \\
     \end{align*}
     \begin{align*}
          & = \underbrace{\sum_{\tau=k}^{k+r}\sum_{t=\tau}^{T-1}\mu_{i,t,\tau}\,g_{i,t,\tau} +\!\!\! \sum_{\tau=k+r+1}^{T-1}\sum_{t=\tau}^{T-1}\mu_{i,t,\tau}\,g_{i,t,\tau} + \lambda_i \sum_{t=k}^{T-1} \theta_{t|k}}_{J_i(\theta_{k:T-1})} \\
         &\quad + \sum_{\tau=k}^{k+r}\sum_{t=\tau}^{T-1} (1-\mu_{i,t,\tau}^{[k,k+r]})\,\mu_{i,t,\tau}^{[k+r+1,t]} g_{i,t,\tau} - \lambda_i \sum_{t=k}^{k+r} \theta_{t|k} \\
         & \le J_i(\theta_{k:T-1}) + \sum_{\tau=k}^{k+r}\sum_{t=k}^{T-1}  g_{i,t,\tau} - \lambda_i \sum_{t=k}^{k+r} \theta_{t|k} \\
        & \overset{(\dagger)}{\le} J_i(\theta_{k:T-1}) + \sum_{\tau=k}^{k+r}\sum_{t=k}^{T-1}  g_{i,t,\tau} - \lambda_i \\
        & \overset{(\ddagger)}{<} J_i(\theta_{k:T-1})
    \end{align*}}
    
 where $(\dagger)$ follows due to $\theta_{k:T-1} \ne \theta^{\text{skip}, r}_{k: T-1}$ and therefore, at least one transmission occurs in $[k, k+r]$ for $\theta_{k:T-1}$; and $(\ddagger)$ follows from the lemma statement that $e^s_{i,k} \in \mathcal{E}_r$. 
 Consequently, skipped transmission is optimal for the window $[k, k+r]$ regardless of the rest of transmission schedule.
 \end{proof}

\begin{remark}[Window-based pruning]
\label{rem:certificate_pruning}
Lemma~\ref{lem:window_safe_skip_pruning} extends the
one-step skip certificate in~\cite[Theorem~1]{hashemi2025linear} to a
multi-step window. In particular, $r=0$ recovers a sufficient condition
for fixing the current decision $\theta_{i,k|k}=0$. For $r>0$, the result
certifies that the entire block
$ 
\theta_{i,k|k},\theta_{i,k+1|k},\ldots,\theta_{i,k+r|k}
$
can be fixed to zero without loss of optimality.

Thus, the lemma can be used as a preprocessing step before
solving the MILP: whenever the certificate holds, the corresponding
scheduling variables are removed from the online search, and the
associated auxiliary product variables and linearization constraints can
be simplified. The benefit becomes more significant when the certificate
holds for longer windows or for several agents simultaneously, since a
larger portion of the binary search space is eliminated before the MILP
solver is invoked.
\end{remark}
\subsection{Auction-inspired Capacity Allocation}\label{subsec:game}

To reduce the online computational burden of the centralized MILP in~\eqref{eq:milp_multi}, we introduce an auction-inspired scheduling mechanism based on a one-step approximation of the centralized objective. The main idea is to isolate the part of the MILP cost that depends on the current transmission decisions and use it to define a per-agent transmission-benefit score. The network then allocates the limited communication slots to the agents with the largest estimated benefits.

\paragraph{Agent-side transmission-benefit score}
We focus on the portion of the MILP objective that depends on the current transmission decision at time \(k\). The contribution of agent \(i\) to the MILP objective can be decomposed as
\[\small
J_i = J_{i,k}^{\mathrm{dep}} + J_{i,k}^{\mathrm{ind}},
\]
where
\[\small
J_{i,k}^{\mathrm{dep}}
=
\sum_{t=k}^{T-1}\mu_{i,t,k}\,g_{i,t,k}
+\lambda_i(1-\bar\theta_{i,k|k}),
\]
\[\small
J_{i,k}^{\mathrm{ind}}
=
\sum_{\tau=k+1}^{T-1}\sum_{t=\tau}^{T-1}\mu_{i,t,\tau}\,g_{i,t,\tau}
+
\lambda_i\sum_{t=k+1}^{T-1}(1-\bar\theta_{i,t|k}).
\]
Here, \(J_{i,k}^{\mathrm{dep}}\) collects all terms that depend on \(\bar\theta_{i,k|k}\), whereas \(J_{i,k}^{\mathrm{ind}}\) collects the remaining terms. Since
\[
\mu_{i,k,k}=\bar\theta_{i,k|k},
\qquad
\mu_{i,t,k}=\bar\theta_{i,k|k}\mu_{i,t,k+1},
\]
it follows that 
{\small
\[
J_i
=
\bar\theta_{i,k|k}
\bigg(
g_{i,k,k}+\sum_{t=k+1}^{T-1}\mu_{i,t,k+1}\,g_{i,t,k}-\lambda_i
\bigg)
+\lambda_i+J_{i,k}^{\mathrm{ind}}.
\]
}Using \(0\le \mu_{i,t,k+1}\le 1\), we obtain the upper bound
\begin{equation}
\small
 J_i
\le
\bar\theta_{i,k|k}
\bigg(
\sum_{t=k}^{T-1} g_{i,t,k}-\lambda_i
\bigg)
+\lambda_i+ J_{i,k}^{\mathrm{ind}}.  
\label{upper}
\end{equation}
Since our objective is to minimize the cost, we use the first-transmission-dependent upper bound in~\eqref{upper} as a surrogate objective and define the transmission-benefit score as
\begin{equation}
b_{i,k}\triangleq \sum_{t=k}^{T-1} g_{i,t,k}-\lambda_i.
\label{eq:benefit_score}
\end{equation}

\paragraph{Network objective (minimum cost form).}
Using the scores \(\{b_{i,k}\}_{i=1}^N\), a one-step network-level approximation of the centralized scheduling problem can be written as
\begin{align}
\min_{\bar\theta_{i,k|k}\in\{0,1\}} \quad
& \sum_{i=1}^N \bar\theta_{i,k|k}\, b_{i,k}
\label{eq:auction_efficiency_min}\\
\text{s.t.}\quad
& \sum_{i=1}^N (1-\bar\theta_{i,k|k})\le C.
\nonumber
\end{align}
Since each transmission consumes one identical unit of network capacity,
\eqref{eq:auction_efficiency_min} is solved by selecting the agents with
the $C$ largest positive transmission-benefit scores, or all positively
scoring agents if fewer than $C$ such agents exist.
Agents with $b_{i,k}\leq 0$ are not scheduled, since
transmitting such agents does not decrease the surrogate objective.
The corresponding decentralized benefit computation and centralized
slot assignment are summarized in Algorithm~\ref{alg:auction}.

\begin{algorithm}[t]
\caption{Auction-Inspired Scheduling at Time \(k\)}
\label{alg:auction}
\small
\begin{algorithmic}[1]
\For{each agent \(i=1,\dots,N\) (in parallel)}
    \State Compute \(g_{i,t,k}\) for \(t=k,\dots,T-1\)
    \State Compute transmission-benefit score 
    $
    b_{i,k}
    $ in \eqref{eq:benefit_score}
\EndFor
\State Scheduler selects \(\{\theta_{i,k|k}\}_{i=1}^N\) by choosing the top-\(C\) positive scores.
\State Apply $\theta_{i,k|k}$, set $k\gets k+1$, and repeat.
\end{algorithmic}
\end{algorithm}

The centralized MPC-MILP scheduler optimizes a multi-step cost-to-go and
thus serves as a natural performance benchmark. Its complexity, however,
grows rapidly with the number of agents and the horizon length because of
the increasing number of binary variables. The auction-based scheduler
offers a scalable alternative: each agent computes a scalar
transmission-benefit score, and the network allocates the $C$ available
slots to the agents with the largest scores. This reduces the online
decision process from a horizon-dependent mixed-integer optimization
\eqref{eq:milp_multi} to a simple sorting operation, yielding
substantially lower computational burden. A window-based transmission-benefit score could be used to capture
longer-term effects, at the expense of increased online computational
complexity.

A further advantage of the auction-based scheduler is its low information requirement. Unlike the centralized MPC-MILP \eqref{eq:milp_multi}, the benefit-driven version requires each agent to communicate only a scalar transmission-benefit score at time \(k\). The network can then allocate capacity without access to full trajectories or internal optimization variables. Consequently, the resulting rule is computationally lightweight, communication-efficient, and naturally suited to distributed implementation. Although this one-step approximation is generally suboptimal and we do yet have a theoretical suboptimality bound, the simulations show that its performance remains close to that of the centralized MPC-MILP scheduler.
A tighter approximation could be obtained by incorporating more than one future step using Lemma~\ref{lem:window_safe_skip_pruning}, at the cost of increased online complexity compared to \eqref{eq:auction_efficiency_min}.
\begin{remark}[Heterogeneous packet sizes]
\label{rem:heterogeneous_packets}
If transmissions have heterogeneous packet sizes, let $\ell_i>0$ denote
the resource units required by agent $i$. Then the capacity constraint becomes
\[
\sum_{i=1}^{N} \ell_i \theta_{i,k} \le C,
\qquad k=0,\dots,T-1.
\]
The centralized MILP and the one-step selection problem
\eqref{eq:auction_efficiency_min} remain unchanged except for this weighted
capacity constraint. In this case, the ratio $b_{i,k}/\ell_i$ can be used
to prioritize agents according to their benefit per unit of communication
resource. Hence, heterogeneous packet sizes can be accommodated without
altering the overall framework.
\end{remark}

\section{Simulation Results}
\label{sec:simulation}

We consider $N=10$ independent discrete-time linear stochastic
subsystems with heterogeneous matrices $A_i$ and~$B_i$. Nine subsystems
are open-loop stable while the tenth subsystem is open-loop unstable. This setting highlights the role of communication
scheduling when agents have different sensitivities to estimation errors.
Each subsystem is controlled by a finite-horizon LQG controller with
\[
Q_i=\mathrm{diag}(8,2),\qquad
R_i=2,\qquad
Q_{T,i}=\mathrm{diag}(10,3).
\]
The shared network can accommodate at most $C=5$ transmissions per time
step.  Each transmission incurs a communication cost \(\lambda_i\), with heterogeneous penalties in \([5,25]\).
The total performance index is
\[
J=J_{\mathrm{control}}+J_{\mathrm{comm}},
\qquad
J_{\mathrm{comm}}
=
\sum_{k=0}^{T-1}\sum_{i=1}^{N}
\lambda_i\theta_{i,k}.
\]

We compare four scheduling strategies: a \textit{receding-horizon
scheduler} that solves the mixed-integer problem
\eqref{eq:milp_multi} at each time step and applies only the first
decision; a \textit{one-shot scheduler} that solves the same problem
once at the beginning of the horizon; an \textit{auction-based
heuristic} solving \eqref{eq:auction_efficiency_min}, in which each
agent computes a local score $b_{i,k}$ and the network serves at most
$C$ agents with the largest positive scores; and a \textit{periodic
round-robin scheduler} satisfying the capacity constraint, where agents
are partitioned into groups of size at most $C$ and served cyclically.
For $N=10$ and $C=5$, the periodic scheduler consists of two groups of
five agents and therefore has a period of two time steps.

We consider a horizon of $T=20$. The initial states and process
disturbances are zero-mean Gaussian random variables with heterogeneous
covariance matrices. Results are averaged over $100$ Monte Carlo (MC) runs.
Within each MC trial, all four scheduling methods are evaluated
using the same initial-state and disturbance realizations to enable a
paired comparison. The average transmission decisions $\theta_{i,k}$
are also recorded to illustrate communication patterns across agents
and time.

Fig.~\ref{fig:cost_comparison} compares the average costs of the four
scheduling strategies, with the total cost decomposed into control and
communication components.
\begin{figure}[t]
\centering
\includegraphics[width=0.60\linewidth,trim=150 290 175 300,clip]
{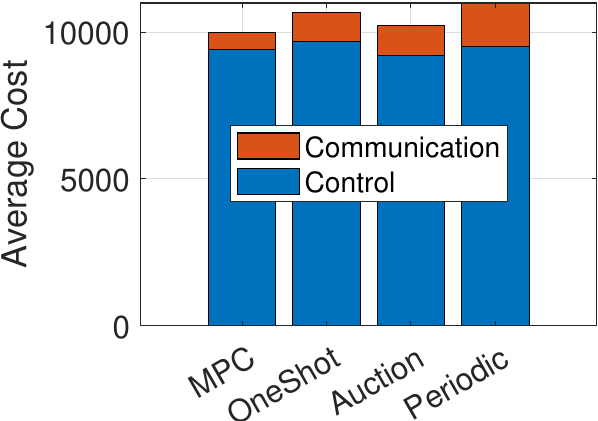}
\caption{Average control and comm. costs over $100$ MC trials.}
\label{fig:cost_comparison} \vspace{-4 mm}
\end{figure}

As shown in Fig.~\ref{fig:cost_comparison}, the centralized MPC
scheduler achieves the lowest average total cost, $J=10002$. The
one-shot, auction-based, and periodic schedulers achieve average total
costs of $10664$, $10241$, and $10993$, respectively. Relative to MPC,
these correspond to increases of approximately $6.62\%$, $2.39\%$, and
$9.90\%$, respectively. The auction-based scheduler therefore attains
performance particularly close to that of the centralized MPC.

The cost decomposition further illustrates the trade-off between control
performance and communication usage. The auction-based scheduler achieves
a slightly lower average control cost than MPC ($9205.1$ versus $9398$),
but incurs a higher communication cost ($1035.7$ versus $604.1$).
Accordingly, MPC attains the lower total objective by communicating more
selectively. On average, MPC uses $2.108$ transmissions per step, compared
with $3.454$ for one-shot, $3.492$ for auction, and $5$ for periodic.
Thus, the periodic policy fully utilizes the network capacity, whereas
the adaptive methods transmit only when communication is sufficiently valuable.

The one-shot scheduler performs worse than MPC and auction because its
schedule is fixed at the start of the horizon and cannot adapt to later
innovations and disturbances. In contrast, MPC and auction update their
communication decisions online. Overall, adaptive scheduling provides a
better balance between control performance and communication cost than
the fixed one-shot and periodic policies.

\begin{figure}[t]
\centering

\begin{subfigure}{0.22\textwidth}
    \centering
    \includegraphics[width=0.95\linewidth,page=1,
    trim={130 280 150 297},clip]
    {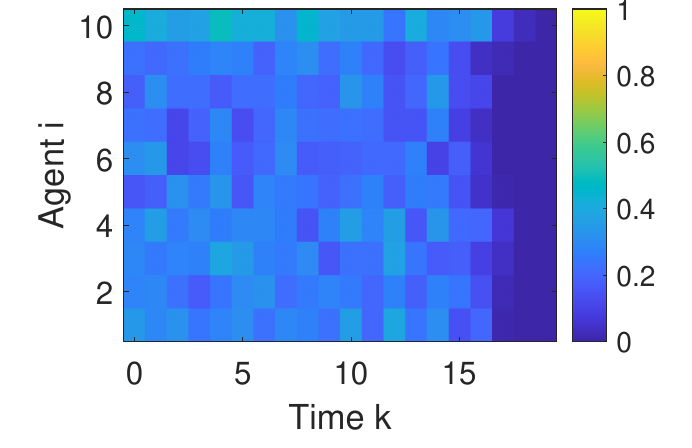}
    \caption{MPC schedule}
\end{subfigure}
\hfill
\begin{subfigure}{0.22\textwidth}
    \centering
    \includegraphics[width=0.95\linewidth,page=1,
    trim={130 280 150 297},clip]
    {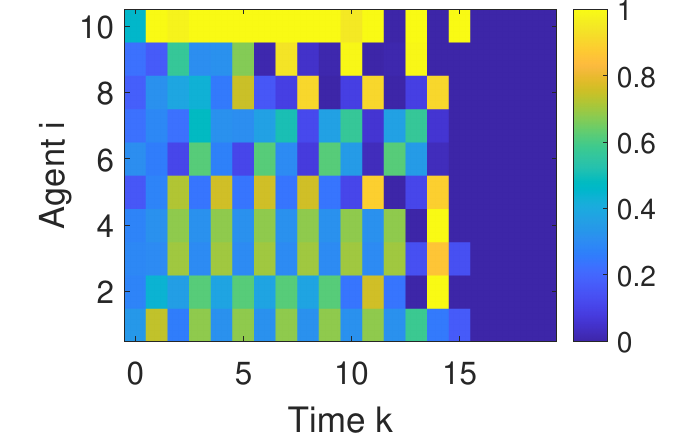}
    \caption{One-shot schedule}
\end{subfigure}

\vspace{0.1cm}

\begin{subfigure}{0.22\textwidth}
    \centering
    \includegraphics[width=0.95\linewidth,page=1,
    trim={130 280 150 297},clip]
    {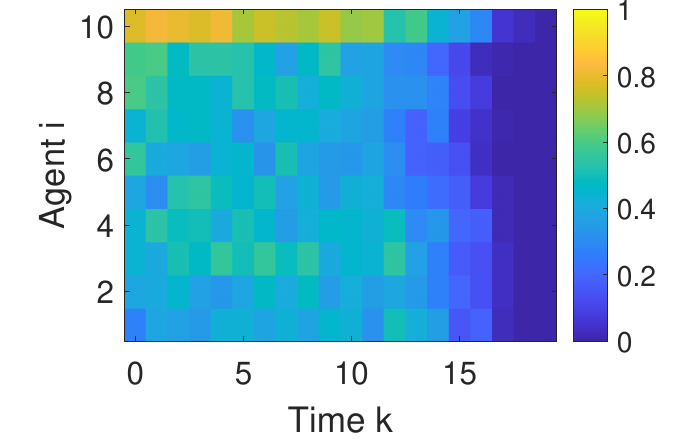}
    \caption{Auction schedule}
\end{subfigure}
\hfill
\begin{subfigure}{0.22\textwidth}
    \centering
    \includegraphics[width=0.95\linewidth,page=1,
    trim={130 280 150 297},clip]
    {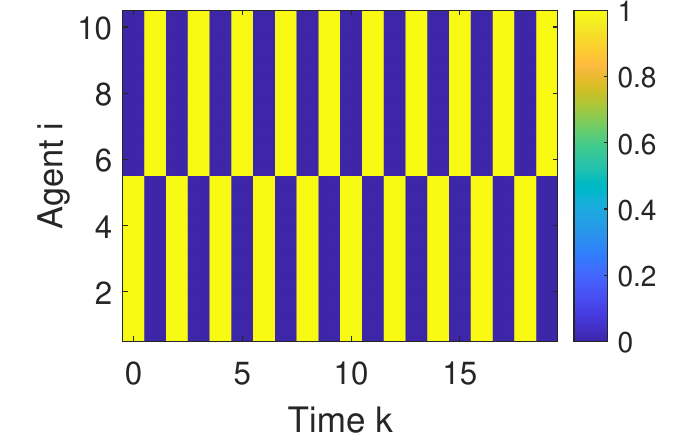}
    \caption{Periodic schedule}
\end{subfigure}

\caption{Comparison of average communication schedules under MPC,
one-shot, auction-based, and periodic strategies.}
\label{fig:transmission_patterns} \vspace{-4 mm}
\end{figure}

Fig.~\ref{fig:transmission_patterns} shows the average transmission
decisions $\theta_{i,k}$ over all Monte Carlo runs. The MPC scheduler
distributes transmissions across agents and time according to the
evolving prediction errors and gradually reduces communication toward
the end of the finite horizon. The one-shot scheduler follows the
transmission sequence computed at the beginning of the horizon and
therefore cannot react to subsequent realizations. The auction-based
scheduler also reallocates communication resources online based on the
current transmission-benefit scores and consequently exhibits an
adaptive transmission pattern. In contrast, the periodic scheduler
follows a fixed alternating pattern that is independent of the system
state.

To assess computational scalability, we additionally compare the online
decision time of the centralized MPC-MILP and the auction-based
scheduler as the number of agents increases. For this experiment, the
network capacity is scaled according to
$C=\lceil N/2\rceil$ so that approximately the same relative
communication capacity is maintained as the network grows.
Fig.~\ref{fig:scalability} reports the average computation time required
for a single scheduling decision. The solid curves represent the mean
computation time over repeated runs, while the shaded regions indicate
$\pm1$ standard deviation.

\begin{figure}[t]
\centering
\includegraphics[
    width=0.70\linewidth,
    trim={140 275 160 285},
    clip
]{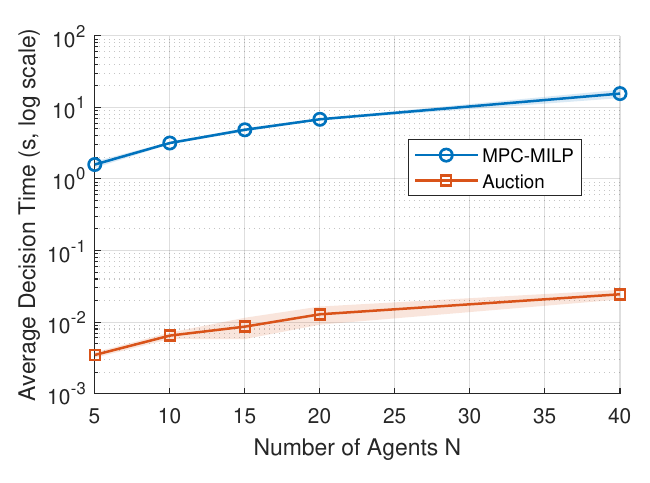}
\caption{Online scheduling time versus number of agents; shaded regions show $\pm1$ standard deviation.}
\label{fig:scalability} \vspace{-4 mm}
\end{figure}

As shown in Fig.~\ref{fig:scalability}, the centralized MPC-MILP
requires computation times on the order of seconds and generally becomes
more expensive as the number of agents grows. In contrast, the
auction-based scheduler remains in the millisecond range, with more than
two orders of magnitude lower computation time across the tested cases.
Together with its total cost being only $2.39\%$ above MPC for
$N=10$ and $C=5$, this demonstrates a favorable
performance-complexity trade-off.

\section{Conclusion}
\label{sec:conclusion}

This paper developed a centralized receding-horizon scheduling framework
for multi-agent LQG control under shared network-capacity constraints,
together with a low-complexity auction-based alternative. Numerical
results show that the MPC scheduler achieves the lowest average total
cost among the considered methods, while the auction-based scheduler
incurs only a $2.39\%$ increase in total cost relative to MPC. The
scalability study further shows that the auction-based scheduler requires
substantially less online computation, remaining in the millisecond range
as the number of agents increases, whereas the centralized MILP requires
computation times on the order of seconds. These results illustrate the
performance-complexity trade-off between centralized optimization and
low-complexity adaptive scheduling. Future work will focus on deriving
analytical suboptimality bounds for the auction-inspired scheduler.

\bibliographystyle{IEEEtran}
\bibliography{main}
\end{document}